\pgfplotsset{width=10cm,compat=1.9}
\newtheorem{theorem}{Theorem}[]
\newtheorem{lemma}[theorem]{Lemma}
\newtheorem{corollary}[theorem]{Corollary}
\newtheorem{definition}[theorem]{Definition}
\newtheorem{claim}[theorem]{Claim}
\title{List-decoding and list-recovery of Reed-Solomon codes beyond the Johnson radius for any rate }
\date{}
\author[$*$]{\textbf{Eitan Goldberg}}
\author[$\dagger$]{\textbf{Chong Shangguan}}
\author[$*$]{\textbf{Itzhak Tamo}}
\affil[$*$]{\footnotesize Department of Electrical Engineering–Systems, Tel-Aviv University, Tel-Aviv 39040, Israel.}
\affil[$\dagger$]{Research Center for Mathematics and Interdisciplinary Sciences, Shandong University, Qingdao 266237, China.}
\affil[$*\dagger$]{\footnotesize Emails: eitang1@mail.tau.ac.il,  theoreming@163.com, zactamo@gmail.com}
\begin{document}
\maketitle

\begin{abstract}
Understanding the limits of list-decoding and list-recovery of Reed-Solomon (RS) codes is of prime interest in coding theory and has attracted a lot of attention in recent decades. However, the best possible parameters for these problems are still unknown, and in this paper, we take a step in this direction. We show the existence of RS codes that are list-decodable or list-recoverable beyond the Johnson radius for \emph{any} rate, with a polynomial field size in the block length. In particular, we show that for any $\epsilon\in (0,1)$ there exist RS codes that are list-decodable from radius $1-\epsilon$ and rate less than $\frac{\epsilon}{2-\epsilon}$, with constant list size. We deduce our results by extending and strengthening a recent result of Ferber, Kwan, and Sauermann on puncturing codes with large minimum distance and by utilizing the underlying code's linearity. 
\end{abstract}

\section{Introduction}

Reed–Solomon (RS) codes \cite{RS-codes} are a classical family of error-correcting codes that have been studied extensively since their introduction in the 1960s. They have found many applications in both theory and practice (see for example \cite{mceliece1981sharing,  wicker1999reed}). In this paper, we consider these codes in the context of list-decoding and list-recovery to understand their performance under these problems. We begin with the needed definitions. 
 
For a prime power $q$, let $\mathbb{F}_q$ be the finite field of order $q$. For two vectors $x,y\in\mathbb{F}_q^n$, the {\it Hamming distance} $d(x,y)$ is the number of coordinates in which they differ, i.e., $d(x,y)=|\{i: x[i] \neq y[i]\}|$, where for $1\le i\le n$, $x[i]$ is the $i$th coordinate of $x$. Given integers $1\leq k\leq n\le q$, an $[n,k]$  code $C$  over $\mathbb{F}_q$ is a $k$-dimensional subspace of $\mathbb{F}_q^n$, where $n$ and $k$ are called the length and the dimension of the code, respectively. 
The {\it rate} of $C$ is defined to be $R:=\log_q|C|/n$, and the (minimum) {\it distance} of $C$ is the minimum Hamming distance between any two distinct vectors (codewords) of it. 

An $[n,k]$-RS code over $\mathbb{F}_q$ with evaluation  vector $\alpha=(\alpha_1,\ldots,\alpha_n)\in \mathbb{F}_q^n$, where $\alpha_i\neq \alpha_j$ for all $i\neq j$, is the $k$-dimensional subspace  
$$\{(f(\alpha_1),\ldots,f(\alpha_n)):f\in\mathbb{F}_q[x],~\deg(f)<k\},$$
where $\mathbb{F}_q[x]$ is the space of polynomials with coefficients in $\mathbb{F}_q$. It is well-known that the minimum distance of an $[n,k]$-RS code equals $n-k+1$, achieving the Singleton bound \cite{singletong-bound} with equality. Therefore, it is an {\it MDS} (maximum distance separable) code. In other words, every RS code has the best possible rate-distance trade-off, and it is optimal for unique decoding when one is required to output a unique codeword, given a corrupted received vector. Furthermore, several efficient algorithms (see for example \cite{Peterson1960,Reed1978RSalgo,Welch-Berlekamp}) for the unique decoding of RS codes are known. Despite this state of affairs, a lot is still unknown in the relaxed version of the unique decoding  problem of RS codes, which are the \emph{list-decoding} problem and its generalization, the \emph{list-recovery} problem. 
 
\paragraph{List-decoding of RS codes.}
In the list-decoding problem, given a corrupted codeword, one is allowed to output a list of possible codewords, in contrast to unique decoding, where the output is one codeword, i.e., the list is of size one. More formally, given $0<r<1$ and $L\in \mathbb{N}$, a code $C\subseteq\mathbb{F}_q^n$ is said to be $(r,L)$ list-decodable if for any $y\in \mathbb{F}_q^n$, 
$$|\{c\in C:d(c,y)\leq rn\}|\leq L,$$ 
where $r$ and $L$ are called the (list-decoding) \emph{radius} and the \emph{list size}, respectively. The notion of list-decoding was introduced independently and Elias \cite{elias} and Wozencraft \cite{wozencraft} in the 1950s. The \emph{list-decoding capacity} theorem (see Theorem 7.4.1 of \cite{guruswami2019essential}) provides the limits of list-decodability for general codes. It states that any code of rate $R$ cannot be list-decoded from a radius larger than $1-R$ with a sub-exponential list size in the block length, whereas there are codes of rate $R$ that are list-decodable with a radius of $1-R-\epsilon$ for every $\epsilon>0$ and list  size $O(1/\epsilon)$.

Due to the importance of RS codes and their prevalence in theory and practice, it is natural to ask how well they perform in the list-decoding problem. Further, besides the mathematical appeal of this question, over the years, the list-decodability of RS codes has found applications in complexity and pseudorandomness \cite{cai1999hardness,STV01,SIVAKUMAR1999270,LP20}.
  
The first result regarding the list-decodability of RS codes that one needs to mention is the well-known Johnson radius  (see \cite{johnson1962new} and Theorem 7.3.3 of \cite{guruswami2019essential}), which indicates that any RS code of rate $R$ is $(1-\sqrt{R}, O(qn^2))$ list-decodable. An efficient algorithm complemented this result, 
given by Guruswami and Sudan \cite{Guru-sudan-algo}, 
list-decodes RS codes up to radius $1-\sqrt{R}$, matching the Johnson bound. These results imply that for any $\epsilon\in (0,1)$, there exist RS codes that are list-decodable up to radius $1-\epsilon$, with rate $\epsilon^2$ and polynomial list size. On the other hand, no RS code can be list-decoded from radius $1-\epsilon$ with a rate larger than $\epsilon$ and a sub-exponential list size. Understanding the exact behavior of RS codes under list-decoding and possibly closing the gap from $\epsilon^2$ to $\epsilon$ is of great interest. 

Several results have indicated that a complete answer to this question is delicate. Ben-Sasson, Kopparty, and Radhakrishnan \cite{Ben-Sasson} showed that for every $\alpha>0$ there exist RS codes of rate $\epsilon^{2-\alpha}$, evaluated at all the elements of the field (namely, full-length RS codes), that are \emph{not} list-decodable from radius $1-\epsilon$. In other words, full-length RS codes can not be list-decoded {\it well} beyond the Johnson radius. However, this negative result left the possibility that some shorter RS codes can still be list-decoded  beyond this radius. This was later shown to be true by Rudra and Wootters \cite{Rudra-Wootters}, who showed that there are $(1-\epsilon,O(\frac{1}{\epsilon}))$ list-decodable RS codes with rate $\epsilon/(\log(q)\log^{5}(1/\epsilon))$. Recently, this was improved by Guo, Li, Shangguan, Tamo, and Wootters \cite{Guo-Li-Shangguan}, who exhibited the existence of $(1-\epsilon, O(\frac{1}{\epsilon}))$  list-decodable RS with rate $\Omega(\epsilon/\log (1/\epsilon))$, matching the list-decoding capacity up to a logarithmic factor. In the other regime of constant list size, Shangguan and Tamo \cite{shangguan2019combinatorial} showed that over an exponentially large field size, there exist $(\frac{L}{L+1}(1-R),L)$ list-decodable RS codes of rate $R$ for $L=2,3$, where the relation of decoding radius, rate, and list size was also shown to be optimal.
 
The last result in this sequence of improvements is the recent result by Ferber, Kwan, and Sauermann \cite{ferber2020}, who were the first to remove the logarithmic factor in the result of \cite{Guo-Li-Shangguan}, and thereby attain the list-decoding capacity up to a constant factor.  More precisely, they showed the existence of $(1-\epsilon, O(1/\epsilon))$ list-decodable RS codes with rate  $\epsilon/15$. Although this was a major improvement, the rate of the code rendered to be upper bounded  by $1/15$. Considering the practicality of these codes, it is of interest to understand whether one can improve the rate, as no RS code used in practice has such a low rate. We should also mention that Ferber, Kwan, and Sauermann in \cite{ferber2020} did not attempt to optimize their parameters to possibly obtain codes with larger rates. A careful  analysis of their method and a specific choice of optimized  parameters in their result yields to an improved rate of  $\epsilon/2$, which implies an upper bound of $1/2$ on the rate of  these RS codes.
In this paper, we push the rate substantially beyond the $1/2$ barrier by improving their method. In fact, we obtain our result for the more general problem of \emph{list-recovery} of RS codes, introduced next. 
 
\paragraph{List-recovery of RS codes.}   

A code $C\subseteq \mathbb{F}_q^n$ is said to be $(r,\ell,L)$ list-recoverable for $r\in (0,1)$ and $\ell,L\in \mathbb{N}$, if for every $n$ subsets $S_1,\ldots,S_n\subseteq \mathbb{F}_q, |S_i|\leq \ell$, $$|\{c\in C:c_i\notin S_i~\text{for at most }~rn ~\text{coordinates}\}|\leq L.$$ 
Evidently, list-recovery is a generalization of the list-decoding problem, since an $(r,1,L)$ list-recoverable code is also $(r,L)$ list-decodable. However, much less is known about  list-recovery  in contrast to list-decoding. A natural generalization of the Johnson bound for list-recovery (see~\cite{GS01}) implies that  any RS code of rate $R$ is list-recoverable from radius $1-\sqrt{\ell R}$ with a polynomial list size $L$. Equivalently, an RS code of rate  $\epsilon^2/\ell$ is list-recoverable up to radius $1-\epsilon$ with input list size $\ell$ and polynomial list size. Also, the Guruswami-Sudan algorithm is, in fact, a list-recovery algorithm that enables an efficient list-recovery up to the Johnson radius. Like list-decoding, Guruswami and Rudra \cite{Guruswami-rudra-limits-list-decoding} showed that some RS codes are not list-recoverable beyond the Johnson radius. None of the works
\cite{Rudra-Wootters,shangguan2019combinatorial,ferber2020} mentioned above considered the more general problem of list-recovery. Yet, recently, Lund and Potukuchi \cite{LP20} proved a list-recovery result, which is analogous to the result of \cite{Rudra-Wootters}. Specifically, they showed the existence of RS codes that are list-recoverable beyond the Johnson radius.  
Similarly to \cite{Rudra-Wootters}, the rate in their result has a $\log(q)$ factor in the denominator, which renders the RS code to have a vanishing rate. In addition, their result applies only to  radius less than $1-1/\sqrt{2}$, and not to any positive radius, as we consider. The current state-of-the-art for list-recoverability is the new work by Guo, Li, Shangguan, Tamo, and Wootters \cite{Guo-Li-Shangguan}, who showed the existence of RS codes of rate $\Omega(\frac{\epsilon}{\sqrt{\ell}\log(1/\epsilon)})$ that are $(1-\epsilon,\ell,O(\ell/\epsilon))$ list-recoverable, which holds only over exponentially large finite fields. 

\subsection{Contribution}
In this paper, we  establish the existence of RS codes that are list-decodable and list-recoverable from a large radius, which in many cases is the largest known radius. Our technique heavily relies on the result of Ferber, Kwan, and Sauermann \cite{ferber2020}. We strengthen their result on list-decoding by optimizing the choice of parameters in their proof,  which yields an improved dependence between the rate and the list-decoding radius. However, this is not sufficient to obtain  our final result. Our main technical contribution, which allows us to improve their result further,  utilizes the linearity of the underlying code and shows that by an encoding argument, one can further improve the code's list-decoding radius. We, in fact, prove the result for the more general problem of list-recovery, as stated below.

\begin{theorem}[Informal]
\label{informaltheorem}
    For any $\epsilon>0$ and $\ell\cdot R<1$ there exist $(1-\frac{\ell+1}{R+1}R-\epsilon,\ell,O(\frac{\ell}{\epsilon}))$ list-recoverable RS codes of rate $R$ over a polynomial (in the length of the code) field size. Equivalently, there exist  $(1-\epsilon,\ell,L)$ list-recoverable RS codes with rate approaching    $\frac{\epsilon}{\ell+1 -\epsilon}$, and  $L$ that depends only on the gap of the  rate to   $\frac{\epsilon}{\ell+1 -\epsilon}$.
\end{theorem}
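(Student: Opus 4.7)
The plan is to use the probabilistic method: fix $q$ to be a sufficiently large polynomial in $n$ and sample the evaluation vector $\alpha = (\alpha_1, \ldots, \alpha_n)$ uniformly from the injections $[n] \hookrightarrow \mathbb{F}_q$. Equivalently, I would view the resulting RS code $C_\alpha$ of dimension $k = \lfloor Rn \rfloor$ as a random puncturing of the full-length RS code on $\mathbb{F}_q$, whose minimum distance is $q-k+1$; this is the starting framework of Ferber--Kwan--Sauermann. The aim is to show that with positive probability $C_\alpha$ is $(1-\epsilon, \ell, L)$ list-recoverable with $L = O(\ell/\epsilon)$, by bounding the expected number of witness configurations of failure.

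The first step is a linearity-based reduction, which is the paper's main technical contribution. If $C_\alpha$ is not list-recoverable at the claimed parameters, there exist sets $S_1, \ldots, S_n \subseteq \mathbb{F}_q$ of size $\le \ell$ and $L+1$ distinct codewords $c_1, \ldots, c_{L+1}$ with $c_j[i] \in S_i$ at $\ge \epsilon n$ coordinates for every $j$. Subtracting $c_{L+1}$ from every $c_j$ and replacing each $S_i$ by $S_i - c_{L+1}[i]$, linearity of the code lets me assume $c_{L+1} = 0$, and the shifted sets then satisfy $0 \in S_i$ at $\ge \epsilon n$ positions. This reduction does two things: it shrinks the eventual union-bound count over codeword tuples from $q^{k(L+1)}$ to $q^{kL}$, and it imposes the structural constraint $0 \in S_i$ at most positions. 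Together these improvements push the rate denominator from the $\ell+1$ one would expect from directly generalizing FKS to list-recovery, to the improved $\ell+1-\epsilon$ of the target.

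For the probabilistic estimate I would introduce, for each position $i$, the quantity $X_i^*(\alpha) := \max_{|S|\le \ell}\,|\{j\in[L+1] : c_j[i]\in S\}|$ with $c_{L+1}=0$. The bad event forces $\sum_i X_i^*(\alpha) \ge (L+1)\epsilon n$; since $X_i^* \ge \ell$ whenever the $L+1$ values at position $i$ are all distinct and $X_i^* - \ell$ counts the excess coming from coincidences among $\{0, p_1(\alpha_i), \ldots, p_L(\alpha_i)\}$, the bad event becomes an excess-collision inequality of the form $\sum_i (X_i^* - \ell) \ge ((L+1)\epsilon - \ell) n$. For any fixed tuple of nonzero polynomials $p_1, \ldots, p_L$ of degree $< k$, each pairwise difference $p_j - p_{j'}$ has at most $k-1$ roots in $\mathbb{F}_q$, so individual collisions are rare for $q = \mathrm{poly}(n)$ large. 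The plan is to adapt the FKS combinatorial counting bound (originally for list-decoding) to this $\ell$-covering setting of list-recovery, and then combine it with the $q^{kL}$ union bound from the linearity reduction to balance out.

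The hard part I anticipate is making the probabilistic bound sufficiently tight to reach the target rate. A naive Chernoff or Markov estimate on the collision count only delivers a rate of order $\epsilon^2/\ell$ or worse, well short of $\epsilon/(\ell+1-\epsilon)$. Both ingredients are essential at once: the refined FKS counting, which exploits the algebraic structure of low-degree polynomial evaluations rather than treating collisions as generic bad events, together with the linearity savings on the union bound. Choosing $L \approx \ell/\epsilon$ and letting the polynomial degree defining $q$ be large enough to absorb lower-order error terms should then complete the argument.
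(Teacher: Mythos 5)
Your high-level plan is aligned with the paper's: random puncturing of the full-length RS code, plus a linearity-based savings of a $q^{Rn}$ factor, is indeed what pushes the denominator in the rate from $\ell+1$ down to $\ell+1-\epsilon$. Your ``subtract $c_{L+1}$ to normalize it to zero'' trick is a legitimate alternative gauge-fixing to the paper's version: in the paper (Claim~\ref{claimZZ}), the observation is that the certificate $(\mathcal{I},S)$ may be shifted to $(\mathcal{I},S+v_a)$ for any codeword $v$, and consequently in the encoding one only records the \emph{coset} of $S_Z$ in $\mathbb{F}_q^{\ell\times|Z|}/Mat(C_{a_Z})$, which costs $q^{\ell|Z|-Rn}$ rather than $q^{\ell|Z|}$. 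Your normalization would instead impose the constraint $0\in S_j$ on the columns $j\in\pi(I_{L+1})\cap Z$, a set of size $>h\approx Rn$, which yields a comparable savings; either framing can work.

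The genuine gap is in how you propose to turn this into a quantitative bound. You describe the FKS bound as a union bound over codeword tuples ($q^{kL}$ after normalization) combined with a probabilistic estimate on $\sum_i X_i^*$ over random $\alpha$. That is not what the FKS/paper argument does, and the distinction matters. The paper never enumerates codewords or lists: it fixes only the combinatorial agreement pattern $\mathcal{I}$ (of which there are $2^{(L+1)\ell n}$), then gives an \emph{injective encoding} of each bad puncturing $a$ via three pieces of data (the coordinates of $a$ outside $M$ together with those in $Z$; the coset of $S_Z$; and the remaining coordinates in $M\setminus Z$, each of which has $\le h$ possibilities because of the minimum distance). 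The codewords $\gamma_k$ and the lists $S_j$ are \emph{reconstructed} during decoding, not union-bounded over. Your first-moment route would additionally need to account for the $\approx q^{\ell n}$ choices of lists, or argue that the extremal $X_i^*$ absorbs this; you do not show how, and you yourself flag that the resulting probability estimate lands at $\epsilon^2/\ell$ rather than the target $\epsilon/(\ell+1-\epsilon)$. The ``refined FKS counting'' you invoke to close this gap \emph{is} the encoding argument, so the two ingredients you propose to ``combine'' are not separable the way your sketch suggests. To complete the proof you would need to replace the probabilistic estimate entirely with the compression/encoding argument (Claim~\ref{claimZ}, the three-step encoding, and its reversibility), rather than run it alongside a union bound.
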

See Corollary \ref{corRSexist} for the exact statement.
Figure \ref{figure_list_recovery} shows the  list-recovery radius as a function of the rate for $\ell=2$,  as given by Theorem \ref{informaltheorem}. Notice that the Johnson radius $1-\sqrt{\ell R}$ is non-negative only for rates $R\leq 1/\ell$, and in this range, one can verify that $1-\sqrt{\ell R}\leq 1-\frac{\ell+1}{R+1}R$, hence Theorem \ref{informaltheorem} always outperforms the Johnson radius.  Also, it is interesting to note that 
Guruswami and Rudra \cite{Guruswami-rudra-limits-list-decoding} showed the existence of RS codes for any rate greater than $1/\ell$ that are \emph{not} list-recoverable. On the other hand, Theorem \ref{informaltheorem} is a somewhat complementary result to it, as we show that for any rate $R$ smaller than $1/\ell$, there are list-recoverable RS codes with radius $1-O(R)$.

Our second main result   follows by specializing the above result on list-recovery to list-decoding, i.e., setting $\ell=1$. The exact statement of the following result appears in Corollary \ref{cor_RS_exist_listdecoding}.
\begin{theorem}[Informal] \label{stamstam}
For any $\epsilon>0$ there exist  $(1-\frac{2}{R+1}R-\epsilon,O(\frac{1}{\epsilon}))$ list-decodable RS codes of rate $R$ over a polynomial (in the length of the code) field size.  Equivalently, there are $(1-\epsilon,L)$ list-decodable RS codes with  rate approaching $\frac{\epsilon}{2-\epsilon}$, and $L$  that  depends only on the gap of the rate to   $\frac{\epsilon}{2-\epsilon}$.
\end{theorem}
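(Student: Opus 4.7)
The plan is to obtain Theorem \ref{stamstam} as an immediate specialization of Theorem \ref{informaltheorem} to the input-list parameter $\ell=1$. First I would observe that list-decoding is exactly the $\ell=1$ case of list-recovery: given $y\in\mathbb{F}_q^n$, taking the singleton sets $S_i=\{y[i]\}$ in the definition of $(r,1,L)$ list-recovery recovers precisely the condition $|\{c\in C:d(c,y)\le rn\}|\le L$. Hence any $(r,1,L)$ list-recoverable code is $(r,L)$ list-decodable, and vice versa. Setting $\ell=1$ in Theorem \ref{informaltheorem} therefore directly yields the existence of RS codes of rate $R$ that are $(1-\tfrac{2R}{R+1}-\epsilon,O(1/\epsilon))$ list-decodable over a field of size polynomial in the block length, which is the first formulation of the theorem.

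Next I would verify the equivalence with the second ``rate approaching $\frac{\epsilon}{2-\epsilon}$'' formulation by a short algebraic manipulation. Given a target list-decoding radius $1-\epsilon$, I would solve the threshold equation $1-\tfrac{2R}{R+1}=\epsilon$ for $R$: clearing denominators gives $(1-\epsilon)(R+1)=2R-(R+1)+(R+1)$, i.e. $R(2-\epsilon)=\epsilon$, so the critical rate is $R^{*}=\tfrac{\epsilon}{2-\epsilon}$. Because Theorem \ref{informaltheorem} tolerates an arbitrarily small slack parameter, choosing any rate $R=R^{*}-\gamma$ with $\gamma>0$ produces codes that are $(1-\epsilon,L)$ list-decodable, and the list size $L=O(1/\gamma')$ supplied by the list-recovery theorem (with $\epsilon$ there playing the role of the slack $\gamma'$ relative to the critical radius) depends only on the gap of the chosen rate to $R^{*}$, as claimed.

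Since the statement is stated as a corollary of the already-proved list-recovery theorem, there is essentially no combinatorial obstacle here; the only care needed is bookkeeping in the parameter translation, namely making sure that the slack $\epsilon$ appearing inside Theorem \ref{informaltheorem} (the additive loss in the radius) is cleanly identified with the rate gap $\gamma$ between the constructed code and the threshold $\tfrac{\epsilon}{2-\epsilon}$ in the equivalent second formulation, so that the quantitative $O(1/\epsilon)$ list size is correctly reported as depending only on this gap and not on the target radius itself.
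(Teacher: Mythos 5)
Your approach matches the paper exactly: Corollary~\ref{cor_RS_exist_listdecoding} (the formal version of this statement) is obtained by applying Corollary~\ref{corRSexist} with $\ell=1$, which is precisely the specialization you describe, and the identification of $(r,1,L)$ list-recovery with $(r,L)$ list-decoding is the right justification. One small algebra slip in your verification of the equivalent formulation: for a target radius $1-\epsilon$ the threshold equation is $1-\frac{2R}{R+1}=1-\epsilon$, i.e.\ $\frac{2R}{R+1}=\epsilon$, which gives $2R=\epsilon(R+1)$ and hence $R(2-\epsilon)=\epsilon$; the intermediate step you wrote, $(1-\epsilon)(R+1)=2R$, would instead give $R=\frac{1-\epsilon}{1+\epsilon}$, though your stated conclusion $R^{*}=\frac{\epsilon}{2-\epsilon}$ is the correct one.
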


\begin{figure}[h!]
    \centering
    \label{fig:my_label}
     \begin{tikzpicture}
       \begin{axis}
      [xlabel= Rate $R$,ylabel= List-recovery radius,]
        \addplot[
         domain=0:1,
         samples=100,
         color=black,
         ]
         {1-x};
      \addlegendentry{$1-R$}
      \addplot[
      domain=0:0.5,
      samples=100,
      color=red,
      ]
      {1-(2*x)^0.5};
      \addlegendentry{$1-\sqrt{2R}$}
      \addplot[
      domain=0:0.5,
      samples=100,
      color=green,
      ]
      {1-(3*x)/(x+1)};
     \addlegendentry{$1-\frac{3}{R+1}R$, Cor. \ref{corRSexist}}
     \end{axis}
    \end{tikzpicture}
    \caption{Known upper and lower bounds on the list-recovery of RS codes for $\ell=2$. The red curve is the Johnson radius,  the green curve is our result  (Corollary \ref{corRSexist}), and the black curve is the list-recovery capacity (upper bound)}
       \label{figure_list_recovery}
\end{figure}
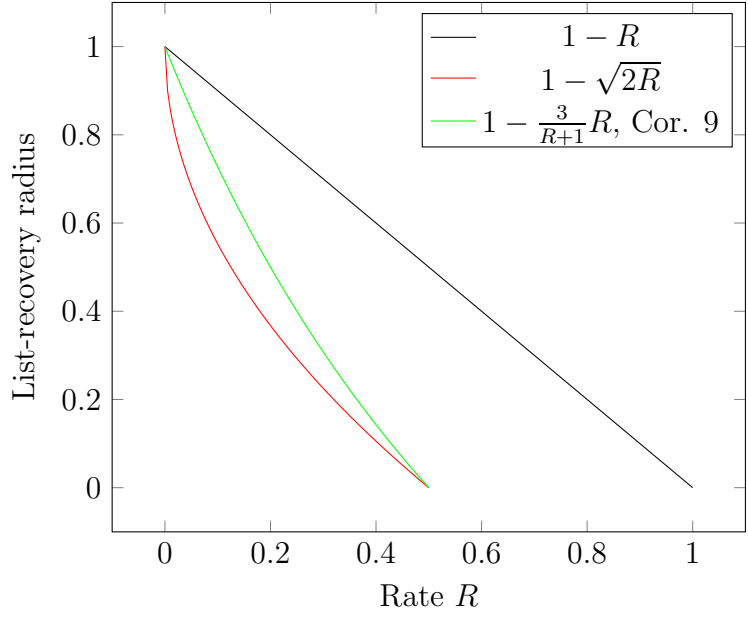

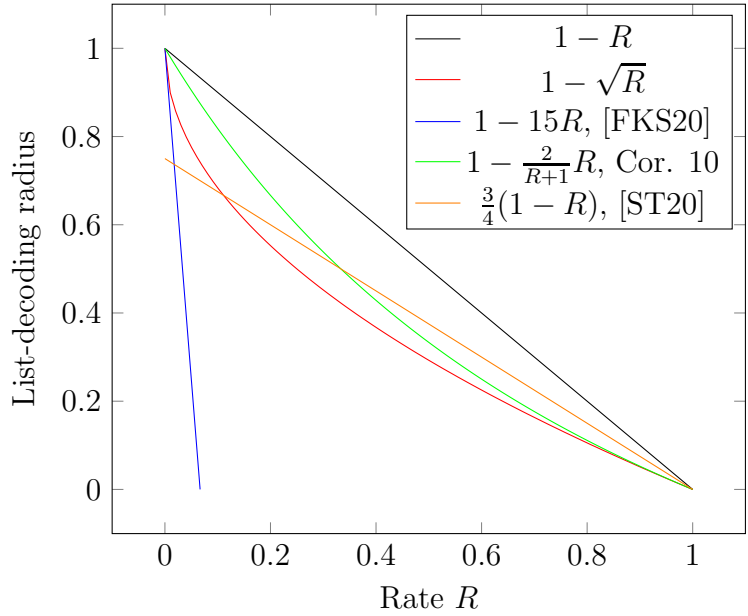
\begin{figure}[h!]
    \centering
    \label{fig:my_label}
    \begin{tikzpicture}
     \begin{axis}
      [xlabel= Rate $R$,ylabel= List-decoding radius,]
        \addplot[
         domain=0:1,
         samples=100,
         color=black,
         ]
         {1-x};
      \addlegendentry{$1-R$}
      \addplot[
      domain=0:1,
      samples=100,
      color=red,
      ]
      {1-x^0.5};
      \addlegendentry{$1-\sqrt{R}$}
      \addplot[
      domain=0:1/15,
      samples=100,
      color=blue,
      ]
      {1-15*x};
      \addlegendentry{$1-15R$, \cite{ferber2020}}
      \addplot[
      domain=0:1,
      samples=100,
      color=green,
      ]
      {1-(2*x)/(x+1)};
      \addlegendentry{$1-\frac{2}{R+1}R$, Cor. \ref{cor_RS_exist_listdecoding}}
       \addplot[
         domain=0:1,
         samples=100,
         color=orange,
         ]
         {(3/4)*(1-x)};
      \addlegendentry{$\frac{3}{4}(1-R)$, \cite{shangguan2019combinatorial} }
     \end{axis}
    \end{tikzpicture}
        \caption{ Known upper and lower bounds on  the list-decoding of RS codes. The black curve is the list-decoding capacity. The  other plots are all lower bounds. The  red curve is   the Johnson bound; the blue curve is the result of Ferber, Kwan, and Sauermann  \cite{ferber2020}, the orange curve is the result of Shangguan and Tamo  \cite{shangguan2019combinatorial}, and  the green curve is our new result in Corollary \ref{cor_RS_exist_listdecoding}.}
   \label{fig:list_decoding}
\end{figure}

As already mentioned, the main result of Ferber, Kwan, and Sauermann \cite{ferber2020} showed that there exist RS codes with rate $\epsilon/15$ that can be list-decoded from radius $1-\epsilon$ and list size $3/\epsilon$. Hence, the rate of $\frac{\epsilon}{2-\epsilon}$ achieved in Theorem \ref{stamstam} is a major improvement of the rate compared to \cite{ferber2020}. 
Furthermore, to the best of our knowledge, this is the first existence result of RS codes that are list-decodable beyond Johnson radius for \emph{any} rate with polynomial list size.
Other results, such as Guo, Li, Shangguan, Tamo, and Wootters \cite{Guo-Li-Shangguan}, and Shangguan and Tamo \cite{shangguan2019combinatorial} exceed it for a wide range of rate values, but not for all values, and their result also requires an exponential field size. 
 Figure \ref{fig:list_decoding} plots   the list-decoding radius as a function of the rate (in green),  as given by Theorem \ref{stamstam}. The other plots are the list-decoding capacity (upper bound) and the other  known lower bounds. It can be seen that, indeed, the curve  derived from  Theorem \ref{stamstam}  is the only one that exceeds the Johnson radius for any rate. 
We summarize the known results and the results provided in this paper in  Table \ref{tab:litreview}.
\renewcommand{\arraystretch}{1.4}
\begin{table}[H]
\centering
\begin{tabular}{|p{3.8cm}||c|c|c|c|}
\hline
 & Radius $r$ & List size $L$ & Rate $R$ & Field size $q$ \\
\hline\hline
\textbf{List-Decoding:} &&&&\\
\hline
Capacity 
& $1 - \epsilon$ & - & $\leq  \epsilon$ & - \\
\hline
Johnson bound & $1 - \epsilon$ & $\text{poly}(n)$ & $C \epsilon^2$ & $q \geq n$ \\
\hline
\cite{Rudra-Wootters} & $1 - \epsilon$ & $\frac{C}{\epsilon}$ & $\frac{C\epsilon}{ \log^5(1/\epsilon) \log(q) }$ & $q \geq Cn\log^C(n/\epsilon)/\epsilon$ \\
\hline 
\cite{shangguan2019combinatorial} & $\frac{L}{L+1}(1-R)$ & $L=2,3$ & $R$ & $q = 2^{Cn}$ \\
\hline
\cite{Guo-Li-Shangguan} & $1 - \epsilon$ & $\frac{C}{\epsilon}$ & $\frac{C\epsilon}{ \log(1/\epsilon) }$ & $q = \left(\frac{1}{\epsilon}\right)^{Cn}$ \\
\hline
\cite{ferber2020} & $1-\epsilon$& $\lceil\frac{3}{\epsilon}\rceil $& $\frac{\epsilon}{15}$ & $q\ge n^{1.25}$\\
\hline
Our work Cor. \ref{cor_RS_exist_listdecoding} & $1-\epsilon$ & $\lfloor\frac{2}{\zeta}\rfloor,~\forall \zeta>0$& $\frac{\epsilon-\zeta}{2-\epsilon+\zeta}$ & poly($n$)\\
\hline
\hline 
\textbf{List-Recovery:} &&&&\\
\hline
Capacity  & $1 - \epsilon$ & - & $\leq  \epsilon$ & -  \\
\hline
Johnson bound & $1 - \epsilon$ & $\text{poly}(n)$ & $\frac{C\epsilon^2}{\ell}$ & $q \geq n$ \\
\hline
\cite{LP20} & $r\leq 1 - 1/\sqrt{2}$& $C\ell$ & $\frac{C}{\sqrt{\ell} \cdot \log q}$ & $q \geq C n \sqrt{\ell} \cdot \log n$ \\
\hline 
\cite{Guo-Li-Shangguan} & $1 - \epsilon$ & $\frac{C\ell}{\epsilon}$ & $\frac{C\epsilon}{\sqrt{\ell} \cdot \log(1/\epsilon)}$ & $q = \left(\frac{\ell}{\epsilon} \right)^{Cn}$ \\
\hline
Our work  Cor. \ref{corRSexist} & $1-\epsilon$ & $\lfloor\frac{2\ell}{\zeta}\rfloor,\forall \zeta>0$& $\frac{\epsilon-\zeta}{l+1-\epsilon+\zeta}$ & poly($n$)\\

\hline
\end{tabular}
\caption{Known results on list-decoding and list-recovery of RS codes.  The ``Capacity'' results are upper bounds on the rate-radius trade-off, $C$ is   an absolute constant,   and  $n$ is sufficiently large  relative to $1/\epsilon$ and $\ell$.
}
\label{tab:litreview}
\end{table}

 %

\newpage
\section{Definitions and background}
\label{notation}
We will need the following notations. For a positive integer $m$ let $[m]=\{1,\ldots,m\}$; for a vector $v\in \mathbb{F}_q^m$ we will also use the notation $v[i]$ for the $i$th coordinate of $v$. For a  vector $a=(a_1,\dots,a_n)$, with $a_i\in [m]$ and $a_i\neq a_j$ for all $i\neq j$, let $$v_a=(v[a_1],\ldots,v[a_n]);$$ for a code   $C\subseteq \mathbb{F}_q^m$,  let $$C_a=\{c_a: c\in C\}.$$    
Recall that an $[n,k]$ code is called an MDS code if its minimum distance $d$ attains the Singleton bound, i.e., $d=n-k+1.$

\begin{definition}
Let $\ell,L,h\in \mathbb{N}, c\in \mathbb{R}$. We say that a vector $a=(a_1,\dots,a_n)$ is a bad puncturing with the certificate $(\mathcal{I},S)$ if 

\begin{enumerate}
\item $\mathcal{I}$ is a family of $L+1$ subsets $I_k\subseteq [\ell]\times [n],k=1,\ldots,L+1$ such that
\begin{equation}
\label{first}    
\sum_{k=1}^{L+1} |I_k|-|\bigcup_{k=1}^{L+1} I_k|>chL.
\end{equation}
    \item $S\in \mathbb{F}_q^{\ell\times n}$ is a $q$-ary $\ell\times n$ matrix with distinct entries in each column.
    \item There exist $L+1$ codewords $\gamma_1,\ldots,\gamma_{L+1}\in C$  such that for  $k=1,\dots, L+1$ 
\begin{equation}
\label{third}
I_k=\{(i,j)\in [\ell]\times [n]:\gamma_k[a_j]=S_{ij}\},
\end{equation}
where $S_{ij}$ is $(i,j)$th entry of the matrix $S$. 
\end{enumerate}
\end{definition}

We will make use of the well-known Chernoff bound.

\begin{lemma}[Chernoff bound, see for example Theorem A.1.4 of \cite{alon2016probabilistic}] \label{chrenoff}
Let $X_1,\ldots,X_s$ be independent Bernoulli random variables with ${\rm Pr}(X_i)=p$ for all $i$, then for all $\epsilon\in [0,1]$ $${\rm Pr}\left[\frac{1}{s}\sum\limits_{i=1}^{s}X_i > p+\epsilon\right ]<e^{-2\epsilon^2 s},$$
and $${\rm Pr}\left[\frac{1}{s}\sum\limits_{i=1}^{s}X_i < p-\epsilon\right]<e^{-2\epsilon^2 s}.$$
\end{lemma}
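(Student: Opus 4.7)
The plan is to prove the Chernoff--Hoeffding tail bound via the classical exponential moment method (Cram\'er--Chernoff transform), which converts a tail estimate for a sum of independent Bernoulli variables into a product of moment generating functions, each of which admits a clean Gaussian-type upper bound.

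First I would pass to the centered variables $Y_i := X_i - p$, so that $\mathbb{E}[Y_i]=0$ and $Y_i \in [-p,\,1-p]$. For any $t>0$, applying Markov's inequality to the nonnegative random variable $e^{t\sum_i Y_i}$ and using independence to factor the expectation gives
\[
\Pr\!\left[\tfrac{1}{s}\sum_{i=1}^{s} X_i > p+\epsilon\right] \;=\; \Pr\!\left[\sum_{i=1}^{s} Y_i > \epsilon s\right] \;\le\; e^{-t\epsilon s}\prod_{i=1}^{s}\mathbb{E}\!\left[e^{tY_i}\right].
\]
The problem is reduced to bounding a single MGF $\mathbb{E}[e^{tY_i}]$ and then optimizing over $t$.

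The key technical step is the MGF estimate. I would invoke Hoeffding's lemma: any mean-zero random variable supported in $[a,b]$ satisfies $\mathbb{E}[e^{tY}] \le \exp\!\bigl(t^2(b-a)^2/8\bigr)$. Since here $b-a=(1-p)-(-p)=1$, this yields $\mathbb{E}[e^{tY_i}] \le e^{t^2/8}$. Substituting back into the Markov bound gives $e^{-t\epsilon s + s t^2/8}$, which is minimized at $t^\star=4\epsilon$ and evaluates to $e^{-2\epsilon^2 s}$, matching the claimed upper-tail inequality. The lower-tail inequality follows by applying the upper-tail bound to $1-X_i$ (Bernoulli with parameter $1-p$); since only the interval length $b-a=1$ enters the estimate, the same constant $2\epsilon^2$ is preserved.

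The only real content of the argument is Hoeffding's lemma itself, which is the step I would flag as the main obstacle---not because it is hard, but because extracting the sharp constant $1/8$ (which is what produces the $2$ in the exponent $-2\epsilon^2 s$, rather than a weaker constant) requires care. I would prove it by reducing, via convexity of $x\mapsto e^{tx}$ on $[a,b]$, to estimating $\varphi(t) := \log\!\bigl(p_0 e^{t(1-p_0)} + (1-p_0) e^{-tp_0}\bigr)$ with $p_0 := -a/(b-a)$; a direct computation gives $\varphi(0)=\varphi'(0)=0$ and $\varphi''(t) \le (b-a)^2/4$, and the bound then follows from Taylor's theorem with integral remainder. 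Everything else is routine Markov-plus-optimization.
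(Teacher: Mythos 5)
Your proof is correct: the centering, Markov/exponential-moment step, Hoeffding's lemma with $b-a=1$ giving $\mathbb{E}[e^{tY_i}]\le e^{t^2/8}$, and the optimization at $t^\star=4\epsilon$ all check out and yield $e^{-2\epsilon^2 s}$, with the lower tail following from the $X_i\mapsto 1-X_i$ symmetry. The paper itself does not prove this lemma; it cites Theorem~A.1.4 of Alon--Spencer, whose proof is the same Cram\'er--Chernoff argument (there the MGF bound is established directly for a centered Bernoulli variable rather than by invoking Hoeffding's lemma in its general $[a,b]$ form, but the content is identical), so your route is essentially the canonical one. The only cosmetic mismatch is that the lemma states a strict inequality while Markov plus Hoeffding naturally gives $\le$; this is immaterial for how the lemma is used in the paper.
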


\section{Results and proofs}
This section contains precise statements of our results and their proofs. The results will follow from the following theorem that shows that a random puncturing of a code with a large minimum distance is not bad with high probability, where we say that a puncturing is bad if it is bad for some certificate.


\begin{theorem}
\label{thmmainlistrecovery}
Let $C\subseteq \mathbb{F}_q^m$ be a linear code with minimum distance $m-h$ and rate $\frac{Rn}{m}$ for $n\in [m]$. Let $\ell,L\in \mathbb{N}$ and $c,c'\in\mathbb{R}$ be constants that satisfy $\frac{\ell+1-c}{c}<R< \frac{\ell}{c}$ and $1<c'<\frac{c-1}{\ell-Rc}$. Assume further that $h \leq q^{-\frac{1}{c'}}m$,
then there are at most $2^{(L+1)\ell n}q^{-\alpha h}m^n$ bad puncturings of $C$, where  $\alpha=\alpha(\ell,c,c',R)>0$. 
\end{theorem}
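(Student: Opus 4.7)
The plan is a union bound over the choice of the set family $\mathcal{I}$, combined with a factored count of the triples $(a,\gamma,S)$ that realize each fixed $\mathcal{I}$. Since each $I_k$ is a subset of $[\ell]\times[n]$, there are at most $2^{(L+1)\ell n}$ choices of $\mathcal{I}$; this matches the leading factor in the target bound, so it suffices to prove that for each fixed $\mathcal{I}$ satisfying condition~\eqref{first}, the number of realizing triples is at most $q^{-\alpha h}m^n$.

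For a fixed $\mathcal{I}$, I set $s:=|\bigcup_k I_k|$ and $t:=\sum_k |I_k|$, and factor the count of triples into three parts. First, given $a$ and $\gamma$, the matrix $S$ is forced on the cells of $\bigcup_k I_k$ by the relation $S_{ij}=\gamma_k(a_j)$ (with consistency required whenever the same cell lies in several $I_k$'s) and free on the remaining $\ell n-s$ cells, so there are at most $q^{\ell n-s}$ admissible $S$'s. Second, each codeword $\gamma_k$ is subject to $|I_k|$ linear evaluation constraints, yielding $\#\gamma_k\le q^{(Rn-|I_k|)^+}$ and hence the product bound $\#\gamma\le q^{(L+1)Rn-t}$ whenever the exponents are nonnegative. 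Third, for fixed $\gamma$, a column $j$ at which some cell $(i,j)$ has multiplicity at least $2$ forces a coincidence $\gamma_k(a_j)=\gamma_{k'}(a_j)$ for two distinct codewords of $C$, which by the minimum-distance hypothesis restricts $a_j$ to an agreement set of size at most $h$.

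The quantitative input from condition~\eqref{first} is twofold. Since the per-column excess $t_j-u_j$ is at most $L$, the total excess $>chL$ forces at least $ch$ compat columns, giving a saving of $(h/m)^{ch}\le q^{-ch/c'}$ on the number of valid $a$'s via the hypothesis $h\le q^{-1/c'}m$. Furthermore, since each cell of $\bigcup_k I_k$ has multiplicity at most $L+1$, we have $t\le(L+1)s$; combined with $t-s>chL$ this yields $s\ge ch$, giving a further saving of $q^{-ch}$ through the matrix count. Combining the three counts and summing over $\mathcal{I}$ produces a bound of the shape $2^{(L+1)\ell n}m^n q^{(L+1)Rn+\ell n - c h\,\Phi(c,c')}$ for the positive combination $\Phi(c,c')=(c'+1)/c'$, and the algebraic inequalities $\frac{\ell+1-c}{c}<R<\frac{\ell}{c}$ and $1<c'<\frac{c-1}{\ell-Rc}$ are tailored exactly so that the exponent is strictly negative after accounting for $h\ge Rn-1$, yielding a positive $\alpha=\alpha(\ell,c,c',R)$.

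The main obstacle is balancing the three savings correctly. Using only the codeword count or only the position count leads to a rate–codimension inequality that does not close under the stated hypotheses. The delicate step is to use simultaneously the refined codeword bound (reflecting the dimension $Rn$ of $C$) and the two lower bounds $s\ge ch$ and $n_c\ge ch$ derived from condition~\eqref{first}, together with the distance-based upper bound $h$ on pairwise agreement sets. The precise constraints on $c$ and $c'$ in the hypothesis are designed exactly so that the two savings $q^{-ch}$ on the matrix count and $q^{-ch/c'}$ on the position count jointly dominate the $|C|^{L+1}\cdot q^{\ell n}$ growth coming from the free parameters.
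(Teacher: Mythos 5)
Your proposal has a genuine gap that is worth understanding clearly, because it is precisely the obstacle that the paper's encoding argument was designed to overcome.

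You propose to bound, for a fixed $\mathcal{I}$, the number of realizing triples $(a,\gamma,S)$ by multiplying three savings together: $\#S\le q^{\ell n-s}$ (given $a,\gamma$), $\#\gamma\le q^{(L+1)Rn-t}$ (given $a,S$), and a saving of $(h/m)^{ch}$ on $\#a$ (given $\gamma$). But these three savings cannot be collected simultaneously, since each one presupposes a different ordering of the free variables. If you enumerate $a$ first, then $\gamma$ is unconstrained and costs $q^{(L+1)Rn}$; if you enumerate $\gamma$ first, then you cannot use the evaluation constraints $\gamma_k[a_j]=S_{ij}$ to reduce $\#\gamma$; and the agreement-set saving on $a_j$ requires the $\gamma_k$'s to be fixed before $a_j$. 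Picking any single admissible order (say $\gamma$, then $a$, then $S$), the best you obtain is $m^n q^{(L+1)Rn+\ell n-s-ch/c'}$, and with $s\ge ch$, $h\approx Rn-1$ this exponent behaves like $n\bigl((L+1)R+\ell-cR(c'+1)/c'\bigr)$, which is positive for large $L$. Your claimed exponent $(L+1)Rn+\ell n-ch(c'+1)/c'$ has the same defect: the $q^{(L+1)Rn}$ payment for the codewords grows linearly in $L$, so no $\alpha$ that depends only on $(\ell,c,c',R)$, as the theorem requires, can be extracted.

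The missing ingredient is the linearity of $C$, which is in fact the novel technical contribution of the paper. The paper observes (its Claim \ref{claimZZ}) that if $a$ is bad with certificate $(\mathcal{I},S)$ then it is also bad with certificate $(\mathcal{I},S+v_a)$ for every $v\in C$. This means $S$ never needs to be specified exactly — only its coset modulo the punctured code $C_{a_Z}$ (restricted to a carefully chosen small set $Z$ of columns) needs to be encoded, which costs $q^{\ell|Z|-Rn}$ rather than $q^{\ell|Z|}$. Crucially, this coset together with the family $\mathcal{I}$ and the values of $a$ on $Z$ already determines all $L+1$ codewords $\gamma_k'=\gamma_k+v$ (since $|\pi(I_k)\cap Z|>h$ pins each one down via the minimum distance), and the codewords in turn determine the remaining coordinates $a_i$, $i\in M\setminus Z$, up to $h$ options each. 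This injective encoding sidesteps the circularity entirely, replaces the fatal $q^{(L+1)Rn}$ factor with the single $q^{-Rn}$ saving, and yields an $\alpha$ independent of $L$. Two further devices are also needed and are absent from your sketch: the induction on $L$ (to discard any $I_{L+1}$ whose overlap with the others is below $ch$), and the probabilistic selection (the paper's Claim \ref{claimZ}) of the small set $Z\subseteq M$ with $|Z|\le|M|/(\lambda_1 c)$ and $|Z\cap\pi(I_k)|>h$ for all $k$, which is where the constants $\lambda_1,\lambda_2$ satisfying \eqref{eq1.1} and the precise constraints on $c,c'$ actually enter.
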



Note that by choice of the parameters we have  $1<\frac{c-1}{\ell-Rc}$,  therefore  one can pick parameter $c'$ in the required range. We will assume that the parameters $h,n,q,m$ are all large enough compared to the fixed constants $\ell,L,c,c',R$. 

\begin{proof} The result will follow by showing that for each family of sets $\mathcal{I}$ that satisfies \eqref{first} there are at most $q^{-\alpha h}m^n$ bad puncturings for $C$ with a certificate $(\mathcal{I},S)$ for some matrix $S\in \mathbb{F}_q^{\ell\times n}$. Then, since the number of such $\mathcal{I}$'s is at most $2^{(L+1)\ell n}$ the result will follow. 

Fix an $\mathcal{I}$ that satisfies $\eqref{first}$ and let $a=(a_1,\ldots,a_n)\in [m]^n$ be a bad puncturing of $C$ with a certificate $(\mathcal{I},S)$. Note, first that for any $I\in \mathcal{I}$ there are no two distinct elements  $(i_1,j),(i_2,j)\in I,$ with $i_1\neq i_2$, since there is no codeword $\gamma \in C$ such that $\gamma[i]=S_{i_1j}$ and $\gamma[i]=S_{i_2j}$, as $S$ has distinct column entries. Therefore, we may assume that any two distinct elements of any  $I\in \mathcal{I}$ do not agree on their second coordinate.

We proceed by induction on  $L$. For $L=0$, there is no $\mathcal{I}$ that satisfies \eqref{first}, and the result holds trivially.  
Next, assume that $L>0$ and that the claim holds for $L-1$. 
If $\mathcal{I}$ contains a set, which we assume without loss of generality to be $I_{L+1}$, that satisfies  $|I_{L+1}\cap\bigcup_{k\in[L]} I_k|<ch$, then  
\begin{align*}
\sum_{k=1}^{L}|I_k| -|\bigcup_{k\in[L]} I_k|=\sum_{k=1}^{L+1}|I_k| -|\bigcup_{k\in[L+1]} I_k|-|I_{L+1}\cap\bigcup_{k\in[L]} I_k|> chL-ch=ch(L-1).
\end{align*}
Then, the vector $a$ is also a bad puncturing for $C$ with the certificate $(\mathcal{I}\backslash I_{L+1},S)$, and therefore by the induction hypothesis
there are at most $q^{-\alpha h}m^n$ bad puncturings $a$ of $C$. 


Next, we assume that for any $k\in[L+1]$, $|I_{k}\cap\bigcup_{k\ne k'} I_{k'}|\ge ch$. Let $\pi:[\ell]\times [n]\to [n]$ be the projection on the second coordinate, i.e., $\pi(i,j)=j$. Let $M\subseteq [n]$, $M:=\pi(\bigcup_{k\ne k'\in[L+1]} I_k\cap I_{k'})$, and notice that for any $I\in\mathcal{I}$,
\begin{equation}
\label{second}
    |\pi(I)\cap M|\ge ch. 
\end{equation}
We will need the following claim that shows  that $M$ contains a relatively small subset that has a large intersection with every $\pi(I),~I\in\mathcal{I}$. The  proof of the claim is very similar to the proof of Claim 4 in \cite{ferber2020}.
 
\begin{claim}
\label{claimZ}
There is a set $Z\subseteq  M$ and $\lambda_1>1/c$ such that $|Z|\leq \frac{|M|}{\lambda_1c}$ and $|Z\cap I_k|> h$ for all $k\in [L+1]$.
\end{claim}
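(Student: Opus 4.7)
My plan mirrors the probabilistic argument of Claim~4 in \cite{ferber2020}. First, I interpret $|Z\cap I_k|$ as $|Z\cap\pi(I_k)|$, which is forced since $Z\subseteq M\subseteq [n]$ while $I_k\subseteq [\ell]\times[n]$; because $\pi$ is injective on each $I_k$ (by the earlier reduction to elements with distinct second coordinates), this equals $|\{(i,j)\in I_k:j\in Z\}|$. The construction is random: pick a small $\delta>0$ and include each element of $M$ in $Z$ independently with probability $p=(1+\delta)/c$.

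For every $k\in[L+1]$, the count $|Z\cap\pi(I_k)|=|Z\cap\pi(I_k)\cap M|$ is a sum of $s_k:=|\pi(I_k)\cap M|\ge ch$ independent Bernoulli$(p)$ variables, so its expectation is at least $pch=(1+\delta)h$. Choosing $\delta''\in(0,\delta/c)$ guarantees $(p-\delta'')ch>h$, and then Lemma~\ref{chrenoff} gives
\[
\Pr\bigl[|Z\cap\pi(I_k)|\le h\bigr]\le \Pr\bigl[|Z\cap\pi(I_k)\cap M|<(p-\delta'')s_k\bigr]<e^{-2\delta''^2 ch}.
\]
A second application to $|Z|$ yields $\Pr[|Z|>(p+\delta')|M|]<e^{-2\delta'^2|M|}$ for any small $\delta'>0$. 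Since $|M|\ge ch$, once $h$ is large (which we may assume by the theorem's standing assumption that the parameters are large enough) a union bound over the $L+1$ recovery events and the single size event has total probability strictly less than $1$, so a realization of $Z$ satisfying all bounds exists.

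Finally I set $\lambda_1:=1/((p+\delta')c)$, which makes $|Z|\le (p+\delta')|M|=|M|/(\lambda_1 c)$ automatic. The requirement $\lambda_1>1/c$ is equivalent to $p+\delta'<1$, and this is where the theorem's hypothesis enters: the inequality $\frac{\ell+1-c}{c}<R$ rearranges to $c>\frac{\ell+1}{R+1}\ge 1$ (using $\ell\ge 1$), so $p=(1+\delta)/c<1$ for $\delta$ small, and shrinking $\delta'$ keeps $p+\delta'<1$. The main (purely bookkeeping) obstacle is synchronizing the three small constants $\delta,\delta',\delta''$ so that the Chernoff probabilities overwhelm the $L+1$ union bound while the slack $c-1>0$ still leaves room for $\lambda_1>1/c$; the hypothesis $c>(\ell+1)/(R+1)>1$ is exactly what makes this feasible.
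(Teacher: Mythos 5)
Your proof is correct and follows essentially the same probabilistic route as the paper's: include each element of $M$ independently with probability slightly above $1/c$, apply the two-sided Chernoff bound (Lemma~\ref{chrenoff}) to the $L+1$ sets $\pi(I_k)\cap M$ and to $|Z|$ (using $|M|\ge ch$ and $|\pi(I_k)\cap M|\ge ch$ from \eqref{second}), and take a union bound; the paper encodes its two slack parameters as $\lambda_1<\lambda_2\in(1/c,1)$ with inclusion probability $1/(\lambda_2 c)$, while you encode them as $\delta,\delta',\delta''$ with inclusion probability $(1+\delta)/c$, and $c>1$ plays the same role in both.

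One caution worth spelling out: the claim only asserts $\lambda_1>1/c$, which you prove, but the paper's proof deliberately takes $\lambda_1>\frac{\ell c'+1}{c(Rc'+1)}$ (strictly larger than $1/c$ since $\ell>R$), and this stronger lower bound is silently invoked afterwards to conclude $\ell c'+1-\lambda_1 c(1+Rc')<0$ in the final estimate of Theorem~\ref{thmmainlistrecovery}. Your $\lambda_1=\frac{1}{1+\delta+\delta' c}$ tends to $1>\frac{\ell c'+1}{c(Rc'+1)}$ as $\delta,\delta'\to 0^+$, so choosing $\delta,\delta'$ small enough does recover the stronger property, but if this claim is to slot into the later argument you should record that choice explicitly rather than only securing $\lambda_1>1/c$.
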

\begin{proof}
Since $c'<\frac{c-1}{\ell-Rc}$ and $Rc<\ell$, then $\frac{\ell c'+1}{c(Rc'+1)}<1$. Hence, there exist $\lambda_1, \lambda_2$ such that 
   \begin{equation}
       \label{eq1.1}
       \frac{1}{c}<\frac{\ell c'+1}{c(Rc'+1)}<\lambda_1<\lambda_2<1,
   \end{equation}
where the first inequality follows as $R<\ell$.
Moreover, $\frac{1}{\lambda_2c}<1$.

Let $Z\subseteq M $ be a random subset formed by choosing each element of $M$ independently with probability  $\frac{1}{\lambda_2c}$. 
By Lemma \ref{chrenoff}, 
\begin{equation*}
{\rm Pr}\left[|Z|>\frac{|M|}{\lambda_1c}\right]\leq  e^{-2(\frac{1}{\lambda_1c}-\frac{1}{\lambda_2c})^2|M|}
\leq e^{-2(\frac{1}{\lambda_1}-\frac{1}{\lambda_2})^2\frac{h}{c}},\\
\end{equation*}
where the last inequality holds as by \eqref{second} we have $|M|\ge ch$.  Furthermore, it also follows by Lemma \ref{chrenoff} and \eqref{second} that
for all $k\in[L+1]$, 
\begin{equation*}
{\rm Pr}\left[|Z\cap \pi(I_k)|\leq h\right]\leq  {\rm Pr}\left[|Z\cap \pi(I_k)|\leq \frac{|M\cap \pi(I_k)|}{c}\right]
\leq e^{-2(\frac{1}{\lambda_2c}-\frac{1}{c})^2|M\cap \pi(I_k)|}\leq e^{-2(\frac{1}{\lambda_2}-1)^2\frac{h}{c}}.
\end{equation*}
 By  the union bound,  the probability that for all $k\in [L+1]$, $|Z\cap \pi(I_k)|> h$ and that $|Z|\leq \frac{|M|}{\lambda_1c}$ is at least $1-e^{-2(\frac{1}{\lambda_1}-\frac{1}{\lambda_2})^2\frac{h}{c}}-(L+1)e^{-2(\frac{1}{\lambda_2}-1)^2\frac{h}{c}}$, which is strictly positive
for large enough $h$. Hence, with a positive probability there exists a set $Z$ with the claimed properties.
\end{proof}
To each set $M$ as above, we associate a fixed subset $Z\subseteq M$, as given by  Claim \ref{claimZ}.

The next claim utilizes the additive structure of the linear code to show that any bad puncturing has many certificates. 
\begin{claim}
\label{claimZZ}
If  $a$ is a bad puncturing of  $C$ with the certificate $(\mathcal{I},S)$, then it is also bad with the certificate $(\mathcal{I},S+v_a)$ for any $v\in C$, where by abuse of notation $S+v_a$ is the matrix obtained by adding to each row of $S$ the vector $v_a$
\end{claim}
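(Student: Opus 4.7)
The plan is to produce explicit witnesses for the new certificate $(\mathcal{I}, S + v_a)$ by translating the original witnesses by $v$. More concretely, given the witnesses $\gamma_1,\ldots,\gamma_{L+1}\in C$ for $(\mathcal{I},S)$, I will set $\gamma'_k := \gamma_k + v$ for each $k\in[L+1]$, and check the three requirements of being a bad puncturing with this new certificate.

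First I would verify the structural requirement that $S+v_a$ has distinct entries in each column. Since the $j$th column of $v_a$ is the constant column whose every entry equals $v[a_j]$, adding $v_a$ to $S$ shifts each column of $S$ by a scalar, and hence preserves distinctness of entries within each column. The condition \eqref{first} on $\mathcal{I}$ is unchanged since $\mathcal{I}$ itself is unchanged, so only the codeword condition \eqref{third} requires work. By linearity of $C$, each $\gamma'_k=\gamma_k+v$ lies in $C$, so it is at least a candidate witness.

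Next I would check the key identity \eqref{third} for the new certificate. For any $(i,j)\in[\ell]\times[n]$,
\begin{equation*}
\gamma'_k[a_j] = \gamma_k[a_j]+v[a_j] = \gamma_k[a_j] + (v_a)_{ij},
\end{equation*}
while $(S+v_a)_{ij}=S_{ij}+(v_a)_{ij}$. Subtracting $(v_a)_{ij}$ from both sides shows $\gamma'_k[a_j]=(S+v_a)_{ij}$ if and only if $\gamma_k[a_j]=S_{ij}$, which by definition means $(i,j)\in I_k$. Thus the set of indices $(i,j)$ where the translated codeword $\gamma'_k$ agrees with $S+v_a$ is exactly $I_k$, establishing \eqref{third} for $(\mathcal{I},S+v_a)$ with witnesses $\gamma'_1,\ldots,\gamma'_{L+1}$.

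There is no real obstacle here; the only ingredient beyond bookkeeping is linearity of $C$, which ensures $\gamma_k+v\in C$. The upshot is that each bad puncturing admits at least $|C|$ distinct certificates (one per $v\in C$), a fact that will presumably be combined with a counting argument in the next step to obtain the claimed $q^{-\alpha h}m^n$ bound by exploiting the large minimum distance of $C$.
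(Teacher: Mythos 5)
Your proof is correct and follows essentially the same argument as the paper: translate the witness codewords by $v$, use linearity of $C$, and verify that the agreement pattern is preserved. You are slightly more thorough than the paper in explicitly checking that $S+v_a$ still has distinct entries in each column (the paper leaves this implicit), which is a worthwhile detail to spell out.
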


\begin{proof}
Let $\gamma_1,\ldots,\gamma_{L+1}\in C$ be codewords such that  for any $k=1,\ldots, L+1$, 
$$I_k=\{(i,j)\in [\ell]\times[n]:\gamma_k[a_j]=S_{ij}\}.$$ 
Fix any $k=1,\ldots,L+1$ and $(i,j)\in I_k$, then the codeword $\gamma_k':=\gamma_k+v\in C$ satisfies

$$(i,j)\in I_k\Longleftrightarrow \gamma_k[a_j]=S_{ij} 
\Longleftrightarrow \gamma_k[a_j]+v[a_j]=S_{ij}+v[a_j]
\Longleftrightarrow \gamma_k'[a_j]= (S+v_a)_{ij}.$$
Equivalently, 
$$I_k=\{(i,j)\in [\ell]\times[n]:\gamma_k'[a_j]=(S
+v_a)_{ij}\},$$
and the result follows. 
\end{proof}


Next, given a family of sets $\mathcal{I}$ that satisfies \eqref{first}, we give an encoding argument to all the bad puncturings with a certificate $(\mathcal{I},S)$, for some matrix $S\in \mathbb{F}_q^{\ell\times n}$.
Note that given $\mathcal{I}$, it is possible to find the set $M$, and then also its  associated fixed  subset $Z\subseteq M$ from  Claim \ref{claimZ}. 
Furthermore, given a certificate $(\mathcal{I},S)$ of a bad puncturing, one can determine the $L+1$ codewords $\gamma_k$ that satisfy \eqref{third}, due to the minimum distance of the code. Since one can determine the values $\gamma_k$ attained at more than $h$ coordinates, which in turn uniquely determines the codeword. We proceed to the encoding.

\paragraph{}
Fix a family of sets $\mathcal{I}$ that satisfies \eqref{first}. For a bad puncturing $a$ with a certificate $(\mathcal{I},S)$ do the following.  

\vspace{0.1cm}
{\bf Encoding:}
\begin{enumerate}
        \item  Encode  the values of the coordinates $a_i$ for $i\in ([n]\backslash M)\cup Z$. Since each $a_i\in [m]$, the encoding has   at most  $m^{n-|M|+|Z|}$ possibilities.
    
    \item Let $S_Z$ and $a_Z$ be the restriction of $S$ and $a$ to the columns and coordinates with indices in $Z$, respectively. 
   Let $Mat(C_{a_z})$ be  the space of all $\ell\times |Z|$ matrices over $\mathbb{F}_q$  whose rows are  $\ell$ identical codewords of  the punctured code $C_{a_Z}$. Since $|Z|>h$, the dimension of $C_{a_Z}$ and also of $Mat(C_{a_z})$ is  $Rn$. 
  Encode the coset of $Mat(C_{a_z})$ in the space $\mathbb{F}_q^{\ell \times |Z|}$  that contains the matrix $S_Z$. This encoding has $q^{\ell|Z|-Rn}$ possibilities.

    \item Encode the coordinates $a_i,i\in M\backslash Z$.
    By the minimum distance of the code $C$,  each  $a_i,i\in M\backslash Z$ has at most $h$ options, since for each $i\in M$ there are at least two distinct $\gamma_k$'s which agree on the coordinate $a_i$. Therefore, the encoding has at most   $h^{|M|-|Z|}$ possibilities. 
\end{enumerate}
To conclude, given  $\mathcal{I}$ the  encoding  is a mapping from the set of bad puncturings with a certificate $(\mathcal{I},S)$ for some matrix $S\in \mathbb{F}_q^{\ell\times n}$,   to the set $[m]^{n-|M|+|Z|}\times \mathbb{F}_q^{\ell\times |Z|}/Mat(C_{a_z}) \times[h]^{|M|-|Z|}$.
Thus, the total number of possible encodings is at most  
$$m^{n-|M|+|Z|}q^{\ell|Z|-Rn}h^{|M|-|Z|}.$$
Next, we show that the encoding is reversible, i.e., it is an injective mapping, and given the encoding of $a$, one can recover $a$.

\vspace{0.2cm}
{\bf Decoding:}
Recall that $\mathcal{I}$ is given, and therefore the set $M$ and its associated subset $Z$ are also known. Then, from step (1) of the encoding we can recover the restriction of $a$ to its coordinates in $([n]\backslash M)\cup Z$, i.e., we know the values of $a_i,~i\in ([n]\backslash M)\cup Z$. Hence, it remains to recover the subvector $a_{M\backslash Z}$. 
By step (2), let $(S+v_a)_Z$ be an arbitrary matrix in the in the coset that contains $S_Z$, where $v\in C$ is some codeword, and $S+v_a$ is the matrix formed by adding the codeword $v_a$ to all the rows of $S$.  By Claim \ref{claimZZ} the vector $a$ is also bad with the certificate $(\mathcal{I},S+v_a)$ and the $L+1$ codewords $\gamma_k':=\gamma_k+v.$ We claim that the $L+1$ codewords $\gamma_k'$ can be deduced from the encoding. Indeed, for any $j\in \pi(I_k)\cap Z$, it holds that $\gamma_k'[a_j]= (S+v_a)_{ij}$, where $i\in[\ell]$ is the unique element such that $(i,j)\in I_k$. Hence, one can recover $\gamma_k'$ since we know its value on $|\pi(I_k)\cap Z|>h$ coordinates. Lastly, by the knowledge of the $\gamma_k'$ for $k=1,\ldots,L+1$ and step (3), one can recover the remaining coordinates $a_i,~i\in M\backslash Z$. Note that the codewords $\gamma_i,\gamma_j$ agree on a coordinate if and only if the codewords $\gamma_i',\gamma_j'$ also agree on this coordinate, and this concludes the decoding. 
 
Since the encoding is an injective mapping, given a family of sets $\mathcal{I}$ that satisfies \eqref{first}, the number of bad puncturings $a$ with respect to $\mathcal{I}$ is at most the size of the image of the mapping, which is at most  

\begin{align}
 m^{n-|M|+|Z|}q^{l|Z|-Rn}h^{|M|-|Z|}
   &=(\frac{h}{m})^{|M|-|Z|}q^{\ell|Z|-Rn}\cdot m^n\nonumber\\
   &\leq q^{-\frac{1}{c'}(|M|-|Z|)}q^{\ell|Z|-R|M|}\cdot m^n\nonumber\\
   &\leq q^{\frac{\ell c'+1-\lambda_1c(1+Rc')}{\lambda_1cc'}|M|}\cdot m^n\nonumber\\
   &\leq q^{\frac{\ell c'+1-\lambda_1c(1+Rc')}{\lambda_1cc'}ch}\cdot m^n\nonumber\\
    &=q^{\frac{\ell c'+1-\lambda_1c(1+Rc')}{\lambda_1c'}h}\cdot m^n\nonumber,
\end{align} 
where the first inequality holds as by assumption $h\leq q^{-\frac{1}{c'}}m$ and $|M|\le n$, the second inequality holds as by Claim \ref{claimZ} $|Z|\le\frac{|M|}{\lambda_1 c}$, and the third inequality holds as by \eqref{eq1.1} $\ell c'+1-\lambda_1c(1+Rc')<0$ and by \eqref{second} $|M|\ge ch$. It follows that the number of bad puncturings is at most $q^{-\alpha h}m^n$, where $\alpha:=-\frac{\ell c'+1-\lambda_1c(1+Rc')}{\lambda_1c'}$. This concludes the induction step, and the result follows.
 \end{proof}
Next, we move to prove the paper's main result on the list-recoverability of linear codes. The result will follow by invoking Theorem \ref{thmmainlistrecovery}. We note that the theorem holds for any linear MDS codes; however, we state it specifically for RS codes.

\begin{theorem}\label{thmlistrecovery}
Let $C\subseteq \mathbb{F}_q^q$ be the full-length RS code of dimension $Rn$ with $n\in [q], R>0$. Let $\ell,L\in \mathbb{N}$ be positive integers and constants $c,c'$  that  satisfy $\frac{\ell+1-c}{c}<R< \frac{\ell}{c}$, $1<c'<\frac{c-1}{\ell-Rc}$, and $Rn\leq q^{1-\frac{1}{c'}}+1$ 
Then, for large enough  $n$ and $q$,  a random puncturing of $C$ to a code of length $n$ is a $(1-\frac{1}{L+1}(\ell+LcR),l,L)$ list-recoverable code with probability at least $1-q^{-\frac{\alpha}{4}Rn}$, for some positive $\alpha :=\alpha(\ell,c,c',R)$.
\end{theorem}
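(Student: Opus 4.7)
The plan is to derive the theorem from Theorem \ref{thmmainlistrecovery} via a two-step reduction: a combinatorial translation turning any list-recovery failure into a bad-puncturing certificate, followed by a probability bound. I would apply Theorem \ref{thmmainlistrecovery} with $m = q$ to the full-length RS code $C$. Its minimum distance is $q - Rn + 1$, so $h = Rn - 1$, and the hypothesis $Rn \le q^{1 - 1/c'} + 1$ is exactly $h \le q^{-1/c'} m$; the remaining assumptions on $c, c', R$ are already in place, so the theorem bounds the number of bad puncturings of $C$ by $2^{(L+1)\ell n} q^{-\alpha h} q^n$.

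For the translation, I would argue the contrapositive: suppose $a$ is a puncturing such that $C_a$ is not $(r, \ell, L)$ list-recoverable for $r := 1 - \frac{\ell + LcR}{L+1}$. Then there exist input lists $S_1, \ldots, S_n \subseteq \mathbb{F}_q$ of size at most $\ell$ and codewords $\gamma_1, \ldots, \gamma_{L+1} \in C$ with $|\{j : \gamma_k[a_j] \in S_j\}| \ge (1-r) n$ for every $k$. I would pack each $S_j$ into the $j$-th column of a matrix $S \in \mathbb{F}_q^{\ell \times n}$, padding with distinct elements of $\mathbb{F}_q$ (feasible since $q > \ell$) so that each column has $\ell$ distinct entries, and set $I_k := \{(i,j) \in [\ell] \times [n] : \gamma_k[a_j] = S_{ij}\}$. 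Column-distinctness gives $|I_k| \ge (1-r) n$, whence
\[
\sum_{k=1}^{L+1} |I_k| - \Bigl|\bigcup_{k=1}^{L+1} I_k \Bigr| \ge (L+1)(1-r) n - \ell n = LcRn > cL(Rn - 1) = cLh,
\]
so $(\mathcal{I}, S)$ certifies $a$ as a bad puncturing via \eqref{first}.

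Combining the two steps, the count of length-$n$ puncturings for which $C_a$ fails $(r, \ell, L)$ list-recoverability is at most the number of bad puncturings. Dividing by the total count of ordered puncturings into distinct coordinates, $q(q-1)\cdots(q-n+1) \ge (q/2)^n$ for $q \ge 2n$, the failure probability is bounded by $2^{((L+1)\ell + 1) n} q^{-\alpha(Rn-1)}$, which for $q$ sufficiently large (depending on $\ell, L, R, \alpha$) and $n$ sufficiently large is at most $q^{-\alpha R n / 4}$, as desired. I expect the main obstacle to be the translation step, since one must choose $S$ and $\mathcal{I}$ so that the distinct-column-entries requirement is met and the strict inequality \eqref{first} holds; the arithmetic pins down precisely the value of $r$ achievable. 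The probability computation and the union over bad puncturings are then routine bookkeeping.
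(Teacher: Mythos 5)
Your proposal is correct and follows essentially the same approach as the paper: apply Theorem \ref{thmmainlistrecovery} with $m=q$, $h=Rn-1$, translate a list-recovery failure into a bad-puncturing certificate by building the matrix $S$ from the input lists (the padding to exactly $\ell$ distinct column entries is a technical detail the paper glosses over but is needed for the definition), verify inequality \eqref{first} with the same arithmetic, and divide the bad-puncturing count by the count of injective tuples. The only cosmetic difference is in the final bookkeeping: you bound $q(q-1)\cdots(q-n+1)\ge (q/2)^n$ for $q\ge 2n$, whereas the paper uses $1-x\ge 2^{-2x}$ to get $\ge q^{-\alpha Rn/4}q^n$ directly; both yield the claimed $q^{-\alpha Rn/4}$ once $q$ and $n$ are large relative to the constants, noting that $Rn\le q^{1-1/c'}+1$ with $c'>1$ forces $q$ to grow super-linearly in $n$, so $q\ge 2n$ eventually holds.
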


\begin{proof}
The result will follow by providing  an upper bound on the number of distinct puncturings of $C$ for which the resulting code in not  $(1-\frac{1}{L+1}(\ell+LcR),\ell,L)$ list-recoverable. Assume that for a vector  $a=(a_1,\ldots,a_n)\in \mathbb{F}_q^n$ with distinct entries the punctured code $C_a$ is not $(1-\frac{1}{L+1}(\ell+LcR),\ell,L)$ list-recoverable, then by definition  there exist  $n$ lists $S_j\subseteq \mathbb{F}_q$ of size $\ell$ each, and $L+1$ distinct code words  $\gamma_1,\ldots,\gamma_{L+1}\in C$, such that for any $k=1,\ldots,L+1$ the number of indices $j\in [n]$ for which $\gamma_k[a_j]\in S_j$ is at least $\frac{\ell n}{L+1}+\frac{LcRn}{L+1}$. 
Let $S$ be an $\ell\times n$ matrix whose $j$th column is the elements of the list $S_j$ ordered arbitrarily. 
Define for $k=1,...,L+1$ the set  $I_k\subseteq [\ell]\times [n]$ to be $$I_k:=\{(i,j):\gamma_k[a_j]=S_{ij}\},$$
and note that   $|I_k|\ge \frac{\ell n}{L+1}+\frac{LcRn}{L+1}$, hence 
$$\sum\limits_{k=1}^{L+1}|I_k|-|\bigcup\limits_{k=1}^{L+1}I_k|\ge \ell n+cRnL-\ell n=cRnL>c(Rn-1)L.$$
This implies that $a$ is a bad puncturing  with the  certificate $(\{I_1,...,I_{L+1}\},S)$. 

Since  $C$ is an MDS code,  its minimum  distance is $q-Rn+1$, then 
by  Theorem \ref{thmmainlistrecovery} with $m=q,h=Rn-1$, the number of such vectors $a$ is at most  \begin{equation}
\label{zzzz}
2^{(L+1)\ell n}q^{-\alpha h}m^n=q^{\ell(L+1)\log_q(2)n-\alpha(Rn-1)}q^n.
\end{equation}
For large enough $q$ (compared to $\ell,c,c',R,L$, which are viewed as constants)  
 $$\ell(L+1)\log_q(2)<\frac{\alpha}{3}R.$$
  Hence since $Rn>6$, then  \eqref{zzzz} is at most   $$q^{\frac{\alpha}{3}Rn-\alpha(Rn-1)}q^n\leq q^{-\frac{\alpha}{2}Rn}q^n.$$ 

Next, it is left to lower bound the total number of $n$-tuples with distinct entries, similar to  \cite{ferber2020}.
Since  $Rn\leq q^{1-\frac{1}{c'}}+1$, then for  large enough  $q$ compared to $n$
\begin{equation}
\label{x}
\frac{n}{q}\leq \min\{\frac{1}{2},\frac{\alpha R}{8}\}.
\end{equation}
Then, the number of $n$-tuples with distinct entries  is
\begin{align}
    q(q-1)\cdots(q-(n-1))&\ge (1-\frac{n}{q})^nq^n\nonumber\\
    & \ge 2^{-2\frac{n^2}{q}}q^n,\label{xx}\\
    & \ge 2^{-\frac{\alpha}{4}Rn}q^n \label{xxx}\\
    & \ge q^{-\frac{\alpha}{4}Rn}q^n \nonumber.
\end{align}
where \eqref{xx}  follows since  $1-x\ge 2^{-2x}$ for $x\in (0,\frac{1}{2})$ and \eqref{x}, and \eqref{xxx} follows by \eqref{x}.
Hence,  the probability of  a random puncturing not to be  $(1-\frac{1}{L+1}(\ell+LcR),\ell,L)$ list-recoverable  is at most $$\frac{q^{-\frac{\alpha}{2}Rn}q^n}{q^{-\frac{\alpha}{4}Rn}q^n}=q^{-\frac{\alpha}{4}Rn},$$
and the result follows.
\end{proof}

The next corollary  follows from  Theorem \ref{thmlistrecovery}, for  a large list size $L$ and the best  possible $c$ for a given rate $R$.
\begin{corollary}
\label{corRSexist}
For $\ell\ge 1, 0<R<\frac{1}{\ell},\epsilon>0$,    $n>n_0(\ell,R,\epsilon)$,  and field size $q\ge n^{\frac{c'}{c'-1}}$, there exist a $(1-\frac{\ell+1}{R+1}R-\epsilon,l,O(\frac{\ell}{\epsilon}))$ list-recoverable $[n,Rn]_q$ RS code, where $c'=c'(\ell,R,\epsilon)>1$.
\end{corollary}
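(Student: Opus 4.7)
The plan is to derive the corollary by specializing Theorem \ref{thmlistrecovery} with a careful choice of the parameters $c,c',L$ as a function of the target parameters $\ell,R,\epsilon$. The key observation is that the list-recovery radius $1-\frac{1}{L+1}(\ell+LcR)$ guaranteed by Theorem \ref{thmlistrecovery} satisfies
\[
\frac{1}{L+1}(\ell+LcR)=cR+\frac{\ell-cR}{L+1},
\]
so as $L\to\infty$ it tends to $cR$. Therefore, to match the target radius $1-\frac{\ell+1}{R+1}R-\epsilon$, I want to take $c$ just slightly above $\frac{\ell+1}{R+1}$ (which is exactly the lower endpoint of the admissible interval $\frac{\ell+1-c}{c}<R<\frac{\ell}{c}$ rewritten as $\frac{\ell+1}{R+1}<c<\frac{\ell}{R}$).

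Concretely, I would set $c=\frac{\ell+1}{R+1}+\delta_1$ where $\delta_1=\delta_1(\ell,R,\epsilon)>0$ is chosen small enough so that $\delta_1 R\le \epsilon/2$, and then take $L=\lceil 2(\ell-cR)/\epsilon\rceil=O(\ell/\epsilon)$ so that $\frac{\ell-cR}{L+1}\le\epsilon/2$. Adding the two estimates gives $cR+\frac{\ell-cR}{L+1}\le\frac{\ell+1}{R+1}R+\epsilon$, which yields the desired radius. I would then choose $c'>1$ in the admissible range $\bigl(1,\frac{c-1}{\ell-Rc}\bigr)$; this range is non-empty precisely because strict inequality $c>\frac{\ell+1}{R+1}$ makes $\frac{c-1}{\ell-Rc}>1$ (at the endpoint $c=\frac{\ell+1}{R+1}$ one checks that $c-1=\ell-Rc=\frac{\ell-R}{R+1}$, so the ratio equals $1$).

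It remains to verify the field-size hypothesis of Theorem \ref{thmlistrecovery}, namely $Rn\le q^{1-1/c'}+1$. With $q\ge n^{c'/(c'-1)}$ we get $q^{1-1/c'}=q^{(c'-1)/c'}\ge n\ge Rn$, so the hypothesis holds. Applying Theorem \ref{thmlistrecovery} for $n\ge n_0(\ell,R,\epsilon)$ large enough, a random puncturing of the full-length RS code $C\subseteq\mathbb{F}_q^q$ of dimension $Rn$ to length $n$ produces a $(1-\frac{1}{L+1}(\ell+LcR),\ell,L)$ list-recoverable code with probability at least $1-q^{-\alpha Rn/4}>0$, hence the punctured code exists. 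Since $1-\frac{1}{L+1}(\ell+LcR)\ge 1-\frac{\ell+1}{R+1}R-\epsilon$ and $L=O(\ell/\epsilon)$, this yields an $[n,Rn]_q$ RS code with the claimed list-recovery parameters.

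The only delicate point, and what I view as the main obstacle, is the coupled nature of the parameter constraints: shrinking $\delta_1$ (so that $c$ approaches $\frac{\ell+1}{R+1}$) forces $c'$ to approach $1$, which in turn forces the field size $q\ge n^{c'/(c'-1)}$ to grow; conversely, taking $c'$ large shrinks the allowed window for $c$. However, for any fixed $\epsilon>0$ the parameters $\delta_1,c,c',L$ can all be fixed as constants depending only on $\ell,R,\epsilon$, so the constant $\alpha>0$ produced by Theorem \ref{thmlistrecovery} is bounded away from zero, and a polynomial (in $n$) field size of the form $n^{c'/(c'-1)}$ suffices, matching the statement.
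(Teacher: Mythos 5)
Your proposal is correct and follows essentially the same route as the paper: take $c$ just above $\frac{\ell+1}{R+1}$ so that $(c-\frac{\ell+1}{R+1})R\le\epsilon/2$, take $L=O(\ell/\epsilon)$ so that $\frac{\ell-cR}{L+1}\le\epsilon/2$, and apply Theorem \ref{thmlistrecovery}. Your explicit checks that the interval $(1,\frac{c-1}{\ell-Rc})$ for $c'$ is non-empty and that $q\ge n^{c'/(c'-1)}$ implies $Rn\le q^{1-1/c'}+1$ are details the paper leaves implicit, but the argument is the same.
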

\begin{proof}
Given $\epsilon,\ell$ and $R$ let $L$ be a positive integer  such that  $L+1\ge \frac{2\ell}{\epsilon}$, hence $L=O(\frac{\ell}{\epsilon})$.
Then, by Theorem \ref{thmlistrecovery} for constants $c,c'$ that satisfy $\frac{\ell+1}{R+1}<c<\frac{1}{R}, 1<c'<\frac{c-1}{\ell-Rc}$ and large enough $n$  and $q\ge n^{\frac{c'}{c'-1}}$,  there exists a  $(1-\frac{\ell+LcR}{L+1},\ell,L)$ list-recoverable $[n,Rn]_q$  RS code.
The result will follow by showing that $1-\frac{\ell+LcR}{L+1}\geq 1-\frac{\ell+1}{R+1}R-\epsilon$ for small enough  $c$.
Indeed, let $c>\frac{\ell+1}{R+1}$ be small enough such that $(c-\frac{\ell+1}{R+1})R\leq \frac{\epsilon}{2}$. Then, 
\begin{equation}
\label{t-1}    
1-cR\ge 1-\frac{\ell+1}{R+1}R-\frac{\epsilon}{2}.
\end{equation}
Moreover, by the choice of $L$,  
\begin{equation}
\label{t0}
1-cR-(1-\frac{\ell+LcR}{L+1})=\frac{\ell-cR}{L+1}\leq  \frac{\ell}{L+1}\leq  \frac{\epsilon}{2}.
\end{equation}
Hence, 
\begin{align}
    1-\frac{\ell+LcR}{L+1}&\geq  1 -cR -\frac{\epsilon}{2} \label{t1}\\
    &\ge 1-\frac{\ell+1}{R+1}R-\epsilon, \label{t2}
\end{align}
where \eqref{t1} and \eqref{t2} follow by \eqref{t0} and \eqref{t-1} respectively, and the result follows.
\end{proof}
The next corollary is analogous  to Corollary \ref{corRSexist}  for list-decoding, as it is a special case of list-recovery.  
\begin{corollary}
\label{cor_RS_exist_listdecoding}
Let  $0<R<1, \epsilon>0$, then for  large enough $n>n_0(R,\epsilon)$, such that $Rn\in \mathbb{N}$ and field size $q\ge n^{\frac{c'}{c'-1}}$, there  exists a $(1-\frac{2}{R+1}R-\epsilon,O(\frac{1}{\epsilon}))$  list-decodable $[n,Rn]$ RS code, where $c'>1$ is a function of $R$ and $\epsilon$.
\end{corollary}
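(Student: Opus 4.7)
The plan is to derive this corollary as an immediate specialization of Corollary~\ref{corRSexist} to the case $\ell = 1$. The key conceptual observation is that the definitions of list-decoding and list-recovery coincide when the input list size is $\ell = 1$: a code $C$ is $(r, 1, L)$ list-recoverable precisely when, for every collection of singletons $S_1 = \{s_1\}, \ldots, S_n = \{s_n\} \subseteq \mathbb{F}_q$ (equivalently, every vector $y = (s_1, \ldots, s_n) \in \mathbb{F}_q^n$), the number of codewords $c \in C$ with $c_i \neq s_i$ in at most $rn$ coordinates — i.e., with $d(c, y) \le rn$ — is at most $L$. This is exactly the definition of $(r, L)$ list-decodability.

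With this identification in hand, I would simply invoke Corollary~\ref{corRSexist} with $\ell = 1$. The hypothesis $0 < R < 1/\ell$ of that corollary becomes $0 < R < 1$, matching our hypothesis. The guaranteed list-recovery radius $1 - \frac{\ell+1}{R+1}R - \epsilon$ specializes to $1 - \frac{2}{R+1}R - \epsilon$, which is exactly the list-decoding radius we want. The list size $O(\ell/\epsilon)$ specializes to $O(1/\epsilon)$. The field size and length constraints $q \ge n^{c'/(c'-1)}$ and $n > n_0(\ell, R, \epsilon)$ carry over verbatim, with $c' = c'(1, R, \epsilon)$ now a function of $R$ and $\epsilon$ alone.

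Since every step reduces to a literal substitution of $\ell = 1$ into the previously proved statement, there is no genuine technical obstacle: all of the work has already been done in Theorem~\ref{thmmainlistrecovery}, Theorem~\ref{thmlistrecovery}, and Corollary~\ref{corRSexist}. The only thing one needs to verify carefully is the equivalence of the two notions for $\ell = 1$ and that all parameter constraints remain consistent after the substitution, both of which are immediate from the definitions given in Section~\ref{notation} and the introduction. The proof is therefore a single short paragraph invoking Corollary~\ref{corRSexist} and pointing out the $\ell = 1$ equivalence.
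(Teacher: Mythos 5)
Your proposal is correct and matches the paper's proof exactly: the paper's proof of Corollary~\ref{cor_RS_exist_listdecoding} is the single line ``Apply Corollary~\ref{corRSexist} with $\ell=1$.'' You have simply spelled out the routine check that $(r,1,L)$ list-recoverability is the same as $(r,L)$ list-decodability and that the parameter constraints specialize consistently, which is exactly what the paper leaves implicit.
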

\begin{proof}
Apply Corollary \ref{corRSexist} with $\ell=1$.
\end{proof}

\section*{Acknowledgements}
The  research  of Eitan Goldberg and Itzhak Tamo is partially supported by the European Research Council (ERC grant number 852953) and by the Israel Science Foundation (ISF grant number 1030/15).

The research of Chong Shangguan is supported by the Qilu Scholar Program of Shandong University and the National Key Research and Development Program of China under Grant No. 2020YFA0712100.


{\small\bibliographystyle{alpha}
\bibliography{new}}

\end{document}